\documentclass[11pt,a4paper,reqno,oneside]{amsart}

\usepackage{amsmath} 
\usepackage{amsthm} 
\usepackage{amssymb} 
\usepackage{bbm} 

\usepackage{emptypage}
\usepackage{enumitem} 
\usepackage{mathtools} 
\usepackage{mathrsfs} 

\usepackage{hyperref}
\usepackage{cleveref,cite}

\newcommand{\rmi}{\mathrm{i}}
\newcommand{\RR}{\mathbb{R}}
\newcommand{\CC}{\mathbb{C}}

\usepackage{xcolor}

\usepackage[margin=0.9in]{geometry}
\newtheorem{theorem}{Theorem}[section]
\newtheorem*{theorem*}{Theorem}

\newtheorem{lemma}[theorem]{Lemma}
\newtheorem{corollary}[theorem]{Corollary}

\theoremstyle{definition}

\theoremstyle{remark}

\newtheorem*{remark*}{Remark}

\newcommand{\con}[1]{\mathbb{#1}}
\newcommand{\C}{\con{C}} 
\newcommand{\R}{\con{R}} 
\renewcommand{\H}{\mathcal{H}}

\newcommand{\cL}{\mathcal{L}}

\newcommand{\half}{\frac{1}{2}}
\DeclareMathOperator{\Dom}{Dom}
\DeclareMathOperator{\sgn}{sgn}

\numberwithin{equation}{section}
\title[Congruence of Dirac operators]{Congruence of Dirac operators with applications to~generalized MIT bag models}

\author[J. Duran]{Joaquim Duran}
\address{Centre de Recerca Matem\`atica, Edifici C, Campus Bellaterra, 08193 Bellaterra, Spain}
\email{jduran@crm.cat}

\author[K. Pankrashkin]{Konstantin Pankrashkin}
\address{Carl von Ossietzky Universit\"at Oldenburg, Fakult\"at V -- Mathematik und Naturwissenschaften, Institut f\"ur Mathematik, Ammerl\"ander Heerstr. 114--118, 26129 Oldenburg, Germany}
\email{konstantin.pankrashkin@uol.de}

\date{\today}
\subjclass[2020]{35Q40, 47B25, 81Q10.}
\keywords{Dirac operator, self-adjointness, generalized MIT bag, infinite mass limit.}

\begin{document}

\begin{abstract}
    We highlight a simple congruence transform that shifts coupling parameters for Dirac operators with shell interactions. As one of the consequences, this leads to new observations concerning the self-adjointness and the infinite mass interpretation of generalized MIT bag models.
\end{abstract}

\maketitle


\section{Introduction}

The goal of the present note is to highlight a simple congruence transform (see Section~\ref{sec:mainresults}) that leads 
to new observations related to Dirac operators with $\delta$-shell interactions, which attracted considerable attention during the last decade; see \cite{cas,vega,graz}.
While the transform itself implicitly appeared in some previous papers (see~\cite[Section~2]{Pizzichillo2021} and~\cite[Lemma~4.5]{converge}), its overall relevance for the study of Dirac operators was not discussed so far to our knowledge.


Although most literature on shell interactions deals with the two- and three-dimensional cases, in this note we consider arbitrary dimensions in order to emphasize that the congruence transform only depends on the algebraic structure of the operators, which we introduce next. 

If $n\geq 2$ and $N:=2^{\lfloor\frac{n+1}{2}\rfloor}$, then there exist
$N\times N$ Hermitian and pairwise anticommuting matrices $\alpha_1,\dots, \alpha_n, \beta$ 
with
\begin{equation*}
    \alpha_k^2 = I_N = \beta^2, \quad \text{where } I_N \text{ is the } N\times N \text{ identity matrix}.
\end{equation*}
The associated Dirac operator $D_m$ with mass $m\in\RR$ is the differential operator acting on vector functions $f:\RR^n\to\CC^N$ by
\[
D_mf:=-\rmi\sum_{k=1}^n\alpha_k\partial_k f +m\beta f.
\]
Throughout the work, $\Sigma\subset \R^n$ will be a Lipschitz hypersurface that splits~$\R^n$ into open sets $\Omega_+$ and $\Omega_-$, so that~$\Sigma=\partial\Omega_+\cap\partial\Omega_-$ and $\RR^n=\Omega_+\cup\Sigma\cup\Omega_-$, and we denote by $\nu$ the unit normal vector field at~$\Sigma$ pointing outwards $\Omega_+$. Given a function $f\in L^2(\R^n,\C^N)$, we denote by $f_{\pm}$ its restriction to $\Omega_\pm$.

For $s_\pm\in[0,1]$ consider the Dirac-Sobolev spaces
\[
H^{s_\pm}_\alpha(\Omega_\pm,\CC^N):=\big\{f_\pm\in H^{s_\pm}(\Omega_\pm,\CC^N):\ D_0f_\pm\in L^2(\Omega_\pm,\CC^N)\big\}
\]
equipped with the norm $\|\cdot\|^2_{H^{s_\pm}_\alpha(\Omega_\pm,\CC^N)}:=\|\cdot\|^2_{H^{s_\pm}(\Omega_\pm,\CC^N)}+\|D_0\cdot\|^2_{L^2(\Omega_\pm,\CC^N)}$. Recall that, due to the general theory of Sobolev spaces, under suitable assumptions on $\Sigma$ (for example, if $\Sigma$ is bounded or has a nice behavior at infinity) the restriction maps
\[
\gamma_\pm: \, C^\infty(\overline \Omega_\pm)\ni f\mapsto f|_\Sigma \in L^2(\Sigma,\CC^N)
\]
uniquely extend to bounded trace maps from $H_\alpha^{s_\pm}(\Omega_\pm,\CC^N)$ to $H^{s_\pm-\half}(\Sigma,\CC^N)$; see~\cite[Section~4.1]{graz}.

We are now ready to introduce the so-called Dirac operator with electrostatic and Lorenz scalar~$\delta$-shell interactions supported on $\Sigma$. Given coupling constants $\eta,\tau\in\RR$ and regularity parameters $s_\pm\in[0,1]$, it is the linear operator in~$L^2(\RR^n,\CC^N)$ acting by
\[
A^{s_+,s_-}_{m,\eta,\tau} :\ f\mapsto D_m f \quad \text{in the distributional sense in } \RR^n\setminus\Sigma
\]
on the domain
\begin{align*}
\Dom A^{s_+,s_-}_{m,\eta,\tau}:=\Big\{
 f\in & L^2(\RR^n,\CC^N):\ f_\pm \in H^{s_\pm}_\alpha(\Omega_\pm,\CC^N),\\
 &\rmi(\alpha\cdot\nu) \big( \gamma_+f_+ - \gamma_-f_-|_\Sigma \big) + \tfrac{1}{2} (\eta I_N + \tau \beta ) \big( \gamma_+f_+ + \gamma_- f_-\big) = 0 \text{ on }\Sigma
\Big\}.
\end{align*}
The condition  for the traces $\gamma_\pm f_\pm$ is often referred to as a transmission condition, and the operator $A^{s_+,s_-}_{m,\eta,\tau}$ is often formally written as 
$A^{s_+,s_-}_{m,\eta,\tau}=D_m+(\eta I_N + \tau \beta )\delta_\Sigma$.
The parameters $s_\pm$ are chosen to obtain a self-adjoint operator in $L^2(\RR^n,\CC^N)$, and the choice is known to depend on the regularity of $\Sigma$; see e.g.~\cite{graz,bpz,Pizzichillo2021}. Our observations in Section~\ref{sec:mainresults} show that all the operators of this type sharing the same value $\eta^2-\tau^2$ are related to each other
by a simple congruence transform and, as a result, they share the same Sobolev regularity of the self-adjointness domains.

In the specific case $\eta^2-\tau^2=-4$ it is easily seen that the operator decouples into a direct sum of two operators in $L^2(\Omega_\pm,\CC^N)$, known as generalized MIT bag operators, which include the celebrated MIT bag model; see \cite{eigcurves}. Our observations in Section~\ref{sec:applications} show that all the generalized MIT bag operators are congruently equivalent to each other. As a result, we show for the first time the $H^1$-regularity of their operator domains for convex $\Omega_+$, which follows by a simple application of recent results obtained in~\cite{Pankrashkin2025} on the MIT bag model; see Section~\ref{sec:selfadj}. Furthermore, our considerations lead to the first infinite mass interpretation of generalized MIT bag boundary conditions, in the spirit of the interpretation for the MIT bag model studied in \cite{ALTMR,barb,bbz,SV}; see Section~\ref{sec:infmass}.

Before going into the details, in the next section we present some elementary results in functional analysis that will be used throughout the note.

\section{Elementary lemmas in functional analysis} \label{sec:FA}

Let $\mathcal H$ be a Hilbert space, and denote by $\cL(\H)$ the space of bounded linear operators in $\mathcal H$.
Recall that for a linear operator $A$ and a bounded linear operator $B$ in a Hilbert space $\H$, by~$BAB$ one means the operator in $\H$
acting as indicated on the domain
\[
\Dom BAB=\big\{x\in \H:\ Bx\in\Dom A\big\}.
\]
Noting that $B$ is in general not a unitary operator, in this section we discuss the transferring of some properties of $A$ to the congruently equivalent operator $BAB$. The first property is self-adjointness.

\begin{lemma}\label{lemma2}
	Let $A$ be a self-adjoint operator in a Hilbert space $\H$. If $B\in\cL(\H)$ is a bijective and self-adjoint operator, then the operator $BAB$ is self-adjoint.
\end{lemma}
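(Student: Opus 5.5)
The plan is to compute the adjoint of $BAB$ directly and show that it equals $BAB$. Since $B$ is bounded, bijective and self-adjoint, the bounded inverse theorem gives $B^{-1}\in\cL(\H)$, and $B^{-1}$ is self-adjoint as well. First, $BAB$ is densely defined: $\Dom BAB = B^{-1}(\Dom A)$, and the image of a dense set under the bounded surjection $B^{-1}$ is dense.

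Next comes the inclusion $BAB\subset (BAB)^*$. For $x,y\in\Dom BAB$ we have $Bx,By\in\Dom A$. By self-adjointness of $B$ and of $A$,
\[
\langle BABx,y\rangle=\langle ABx,By\rangle=\langle Bx,ABy\rangle=\langle x,BABy\rangle .
\]
So $BAB$ is symmetric.

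The main step is the reverse inclusion $(BAB)^*\subset BAB$. Take $y\in\Dom (BAB)^*$ and set $z:=(BAB)^*y$, so that $\langle BABx,y\rangle=\langle x,z\rangle$ for all $x\in\Dom BAB$. Substitute $x=B^{-1}u$ with $u$ ranging over all of $\Dom A$. This is legitimate precisely because $B$ is bijective, so $B^{-1}u\in\Dom BAB$. The left-hand side becomes $\langle BAu,y\rangle=\langle Au,By\rangle$. The right-hand side becomes $\langle B^{-1}u,z\rangle=\langle u,B^{-1}z\rangle$, using that $B^{-1}$ is self-adjoint. Hence
\[
\langle Au,By\rangle=\langle u,B^{-1}z\rangle \quad\text{for all } u\in\Dom A,
\]
which means $By\in\Dom A^*=\Dom A$ and $ABy=B^{-1}z$. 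Therefore $y\in\Dom BAB$ and $BABy=z$.

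I do not expect a real obstacle here. The only point needing care is that the substitution $x=B^{-1}u$ sweeps out all of $\Dom A$, which is exactly where bijectivity (surjectivity of $B$) is used. Self-adjointness of $B$ is used to move $B$ across the inner product, and of $B^{-1}$ to move $B^{-1}$.
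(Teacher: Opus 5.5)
Your proposal is correct and follows essentially the same route as the paper: you get density of $\Dom BAB$ from the bounded inverse $B^{-1}$, then substitute $x=B^{-1}u$ (sweeping out all of $\Dom A$ by surjectivity) to transfer the adjoint relation to $A$, concluding $By\in\Dom A$ with $ABy=B^{-1}z$. The only cosmetic difference is that you verify symmetry separately, whereas the paper phrases the computation as a chain of equivalences that yields $(BAB)^*=BAB$ in one step.
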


\begin{proof}
	By assumption $B^{-1}\in\cL(\H)$, hence $BAB$ is densely defined. For $(x,y)\in\H\times\H$,
	the conditions $x\in \Dom(BAB)^*$ and $y:=(BAB)^*x$ are equivalent to
	\[
	\langle x,BAB  z\rangle=\langle y,z\rangle \text{ for all }z\in \Dom BAB.
	\]
	We have $\langle x,BAB  z\rangle=\langle Bx,A Bz\rangle$, and by denoting $w:=Bz$ we rewrite the last condition
	as
	\[
	\langle Bx,Aw\rangle=\langle y,B^{-1}w\rangle\equiv \langle B^{-1}y,w\rangle  \text{ for all }w\in \Dom A,
	\]
	which, by the self-adjointness of $A$, means that $Bx\in \Dom A^*\equiv \Dom A$ with $ABx \equiv A^*(Bx) =B^{-1}y$. In conclusion, $x\in \Dom BAB$ with $BABx=y=(BAB)^*x$, as desired.
\end{proof}

The second property is the convergence behavior.

\begin{lemma}\label{lem11}
	Let $\H$ be a Hilbert space and $\H_0\subset\H$ be a closed subspace. Let $A_n$ be self-adjoint operators in $\H$
	and $A$ be a self-adjoint operator in $\H_0$. Further let $B,C\in \cL(\H)$ be self-adjoint such that $B$ is bijective
	and $\H_0$ is an invariant subspace for both $B$ and $C$. If for some $\lambda\in\CC\setminus\RR$ one has
	the norm convergence
	\begin{equation}
		\label{anl}
		(A_n-\lambda)^{-1}\xrightarrow{n\to\infty}(A-\lambda)^{-1}\oplus 0,
	\end{equation}
	with respect to the decomposition $\H=\H_0\oplus \H_0^\perp$,
	then for all $z\in\CC\setminus\RR$ one has the norm convergence
	\[
	(BA_nB+C-z)^{-1}\xrightarrow{n\to\infty}(BAB+C-z)^{-1}\oplus 0.
	\]	
\end{lemma}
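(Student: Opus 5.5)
Three reductions bring this down to a standard fact about norm resolvent convergence. First, pass from the given $\lambda$ to every non-real spectral parameter. Second, move the non-unitary congruence by $B$ onto the resolvent as a bounded-operator identity. Third, absorb the bounded perturbation $C$ with a second resolvent identity.

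\textbf{Step 1 (all $\lambda$).} Write $R_n(\lambda):=(A_n-\lambda)^{-1}$ and $R(\lambda):=(A-\lambda)^{-1}\oplus 0$. The limit $R(\lambda)$ is the resolvent of the self-adjoint relation $A\oplus(\{0\}\times\H_0^\perp)$, so it satisfies the resolvent identity. Norm convergence at one non-real $\lambda$ then propagates to all non-real $\mu$ through
\[
R_n(\mu)=R_n(\lambda)\bigl(I+(\mu-\lambda)R_n(\lambda)\bigr)^{-1}
\]
for $\mu$ near $\lambda$, using the uniform bound $\|R_n(\mu)\|\le |\Im\mu|^{-1}$. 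A connectedness argument covers each half-plane, and taking adjoints covers the other half-plane. This is the standard fact for limits in the sense of relations, and I would simply cite or sketch it.

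\textbf{Step 2 (congruence).} For $\lambda\notin\RR$, the operator $BA_nB-\lambda$ equals $B(A_n-\lambda B^{-2})B$. This operator is bijective since it is self-adjoint by Lemma~\ref{lemma2}, so its resolvent exists. The aim is a formula expressing $(BA_nB-\lambda)^{-1}$ through $R_n$. Setting $K:=B^{-2}-I$, which is bounded and self-adjoint, we get $A_n-\lambda B^{-2}=A_n-\lambda-\lambda K$. Hence
\[
(BA_nB-\lambda)^{-1}=B^{-1}R_n(\lambda)\bigl(I-\lambda K R_n(\lambda)\bigr)^{-1}B^{-1},
\]
provided the middle inverse exists. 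It exists because $I-\lambda KR_n(\lambda)=(A_n-\lambda B^{-2})R_n(\lambda)$, which is bijective since $A_n-\lambda B^{-2}$ is bijective from $\Dom A_n$ onto $\H$.

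\textbf{Step 3 (uniform invertibility) — the main obstacle.} Passing to the limit in Step 2 requires
\[
\sup_n\bigl\|\bigl(I-\lambda KR_n(\lambda)\bigr)^{-1}\bigr\|<\infty.
\]
The plan is to express this inverse as $I+\lambda K(BA_nB-\lambda)^{-1}B^2$. One checks this directly from $R_n=B(BA_nB-\lambda)^{-1}B$ composed with the appropriate factors. It is then bounded by $1+|\lambda|\,\|K\|\,\|B\|^2/|\Im\lambda|$, uniformly in $n$. With this bound, the limit of the middle factor is $(I-\lambda K R(\lambda))^{-1}$, and the formula of Step 2 converges in norm to
\[
B^{-1}R(\lambda)\bigl(I-\lambda KR(\lambda)\bigr)^{-1}B^{-1}.
\]
The same algebra, carried out on $\H_0$, which is invariant under $B$, $B^{-1}$ and $K$, identifies this limit with $(BAB-\lambda)^{-1}\oplus 0$. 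Invariance of $\H_0$ under a self-adjoint $B$ makes $\H_0^\perp$ invariant as well, so the block decomposition survives.

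\textbf{Step 4 (adding $C$).} Since $C$ is bounded, self-adjoint and reduced by $\H_0$, and since $\|(BA_nB-z)^{-1}\|\le|\Im z|^{-1}$, the second resolvent identity
\[
(BA_nB+C-z)^{-1}=(BA_nB-z)^{-1}\bigl(I+C(BA_nB-z)^{-1}\bigr)^{-1}
\]
applies. The middle inverse is uniformly bounded, again being $I-C(BA_nB+C-z)^{-1}$, which is bounded by $1+\|C\|/|\Im z|$. So norm convergence transfers, and the limit is identified as $(BAB+C-z)^{-1}\oplus 0$ using the $\H_0$-reducing property of $C$. Combining this with Step 1 gives the convergence for all $z\in\CC\setminus\RR$.
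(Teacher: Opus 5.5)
Your proof is correct, apart from two computational slips that do not affect the argument. In Step 1 the identity has a minus sign: $R_n(\mu)=R_n(\lambda)\bigl(I-(\mu-\lambda)R_n(\lambda)\bigr)^{-1}$. In Step 3, the operator equal to $B(BA_nB-\lambda)^{-1}B$ is $(A_n-\lambda B^{-2})^{-1}$, not $R_n(\lambda)$. The correct formula is therefore
\[
\bigl(I-\lambda KR_n(\lambda)\bigr)^{-1}=(A_n-\lambda)(A_n-\lambda B^{-2})^{-1}=I+\lambda KB(BA_nB-\lambda)^{-1}B .
\]
Your bound $1+|\lambda|\,\|K\|\,\|B\|^2/|\mathrm{Im}\,\lambda|$ holds for this expression unchanged.

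Your route differs from the paper's. The paper puts $B$, $C$ and $z$ into a single bounded operator $K=B^{-1}CB^{-1}-zB^{-2}$, which leaves $\H_0$ invariant. This gives $(BA_nB+C-z)^{-1}=B^{-1}(A_n+K)^{-1}B^{-1}$. It then compares $A_n+K$ with $A_n-\lambda$ at the given $\lambda$ only: $(A_n+K)^{-1}=T_n^{-1}(A_n-\lambda)^{-1}$ with $T_n=1+(A_n-\lambda)^{-1}(K+\lambda)$. The hypothesis gives $T_n\to T\oplus 1$ in norm. Here $T=1+(A-\lambda)^{-1}(K+\lambda)$ is invertible in $\cL(\H_0)$ because $A+K$ is boundedly invertible in $\H_0$. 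Continuity of inversion then finishes the proof.

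Since $z$ enters only through the bounded $K$, the resolvent of $A_n$ is never needed away from $\lambda$. Your Step 1 therefore disappears, and $C$ needs no separate Step 4. In fact, had you kept $R_n(\lambda)$ in Step 2 but put $B^{-1}CB^{-1}-zB^{-2}$ into the perturbation, you would have arrived at exactly the paper's proof.

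What your version offers is modularity: three standard stability facts (independence of the spectral parameter, congruence, bounded perturbation). Each is controlled by an $n$-uniform bound on the inverse middle factor, coming from the self-adjointness of $BA_nB$ or $BA_nB+C$. Such a bound yields invertibility of the limiting factor directly from the approximants. The paper instead identifies the limit $T\oplus 1$ explicitly and checks its invertibility on $\H_0$. The cost of your route is length, and Step 1 needs the connectedness-plus-adjoint argument that the paper's single-perturbation trick avoids.
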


\begin{proof}
	Denote $	K:=B^{-1}CB^{-1}-z B^{-2}$ 
	and observe that $K\in \cL(\H)$ with $K(\H_0)\subset \H_0$. Since
	\begin{align*}
		(BA_nB+C-z)^{-1}&=B^{-1}(A_n+K)^{-1}B^{-1},\\
		(BAB+C-z)^{-1}\oplus 0&=B^{-1}\big( (A+K)^{-1}\oplus 0\big)B^{-1},
	\end{align*}
	it is sufficient to show the norm convergence $ (A_n+K)^{-1}\xrightarrow{n\to\infty}(A+K)^{-1}\oplus 0$.
	To this end, notice that $(A_n-\lambda)^{-1}-(A_n+K)^{-1}=(A_n-\lambda)^{-1}(K+\lambda)(A_n+K)^{-1}$, namely
	\[
	(A_n-\lambda)^{-1}=\big(1+(A_n-\lambda)^{-1}(K+\lambda)\big)(A_n+K)^{-1}.
	\]
	This identity shows that the operators $T_n:=1+(A_n-\lambda)^{-1}(K+\lambda)$ are bounded with bounded inverse, and $(A_n+K)^{-1}=T_n^{-1}(A_n-\lambda)^{-1}$.
	Similarly, $(A+K)^{-1}=T^{-1}(A-\lambda)^{-1}$, with the operator
	$T:=1+(A-\lambda)^{-1}(K+\lambda)\in \cL(\H_0)$, and $T^{-1}\in \cL(H_0)$.
	The convergence \eqref{anl} leads to $T_n\xrightarrow{n\to\infty} 1+\big((A-\lambda)^{-1}\oplus 0\big)(K+\lambda)\equiv T\oplus 1$.
	Since the limit operator $T\oplus 1$ is invertible, it follows that $T_n^{-1}\xrightarrow{n\to\infty} (T\oplus 1)^{-1}\equiv T^{-1}\oplus 1$,
	and then
	\begin{align*}
		(A_n+K)^{-1}&=T_n^{-1}(A_n-\lambda)^{-1}\xrightarrow{n\to\infty} (T^{-1}\oplus 1)\Big((A-\lambda)^{-1}\oplus 0\Big)\\
		&=\big(T^{-1}(A-\lambda)^{-1}\big)\oplus 0\equiv (A+K)^{-1}\oplus 0. \qedhere
	\end{align*}
\end{proof}

\section{The congruence transforms} \label{sec:mainresults}

Given $t\in \RR$, consider the $N\times N$ Hermitian matrix
\[
B_t:=\Big(\cosh \frac{t}{2}\Big)I_N+\Big(\sinh \frac{t}{2}\Big)\beta\equiv \exp \left( \frac{t}{2} \beta \right),
\]
which clearly satisfies $B_t B_r=B_{t+r}$, $B_0=I_N$, $(B_t)^{-1}=B_{-t}$ for $t,r\in\RR$. The following is the key observation of this note, which relates massless Dirac operators with shell interactions parameterized by pairs $(\eta,\tau)$ and $(\eta_0,\tau_0)$ lying in a common hyperbola, and which allows to apply the abstract results of Section~\ref{sec:FA}.

\begin{theorem} \label{thm:conjugations}
	Let $\eta,\tau,\eta_0,\tau_0\in\RR$ be such that $\eta^2-\tau^2=\eta_0^2-\tau_0^2\ne 0$.
	Then there exists $r\in\RR$ such that
\[
	\text{either} \quad \text{(i) }\, 
	A^{s_+,s_-}_{0,\eta,\tau}=B_r A^{s_+,s_-}_{0,\eta_0,\tau_0}B_r
	\qquad \text{or}\quad \text{(ii) } \,
	A^{s_+,s_-}_{0,\eta,\tau}=-\beta B_r A^{s_+,s_-}_{0,\eta_0,\tau_0} \beta B_r.
\]
\end{theorem}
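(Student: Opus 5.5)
The plan is a direct computation. Conjugating by $B_r$ leaves the differential expression $D_0$ unchanged. It only moves the point $(\eta_0,\tau_0)$ along the hyperbola $\eta^2-\tau^2=\mathrm{const}$ by a hyperbolic rotation. Conjugating instead by $\beta B_r$, together with the overall sign in (ii), additionally sends that point to $(-\eta,-\tau)$, i.e.\ to the opposite branch.

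\emph{Step 1 (action).} From $\alpha_k\beta=-\beta\alpha_k$ we get $\alpha_k B_r=B_{-r}\alpha_k$ and $\alpha_k\beta B_r=-\beta B_{-r}\alpha_k$. Since $B_r$ and $\beta$ are constant matrices, for $f$ with $f_\pm\in H^{s_\pm}_\alpha(\Omega_\pm,\CC^N)$ this gives, in $\RR^n\setminus\Sigma$,
\[
B_r D_0 (B_r f)=B_rB_{-r}D_0 f=D_0 f,
\qquad
-\beta B_r D_0(\beta B_r f)=\beta B_r\beta B_{-r}D_0 f=D_0 f,
\]
where the last equality uses that $\beta$ commutes with $B_{\pm r}$ and $\beta^2=I_N$.
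Multiplication by an invertible constant matrix preserves $H^{s_\pm}$ and the condition $D_0 f_\pm\in L^2$, by the identities above. It also commutes with the traces $\gamma_\pm$. So the regularity part of the domain is the same on both sides, and only the transmission condition remains to be compared.

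\emph{Step 2 (transmission condition).} Take $g=B_rf$ in the domain of $A^{s_+,s_-}_{0,\eta_0,\tau_0}$ and multiply its transmission condition on the left by the invertible matrix $B_r$. Using $B_r(\alpha\cdot\nu)B_r=\alpha\cdot\nu$ and
\[
B_r(\eta_0+\tau_0\beta)B_r=(\eta_0+\tau_0\beta)B_{2r}=(\eta_0\cosh r+\tau_0\sinh r)+(\eta_0\sinh r+\tau_0\cosh r)\beta ,
\]
the condition becomes the one for $f$ with parameters $(\eta,\tau)=(\eta_0\cosh r+\tau_0\sinh r,\ \eta_0\sinh r+\tau_0\cosh r)$.

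For case (ii), multiply by $\beta B_r$ instead. Now $\beta B_r(\alpha\cdot\nu)\beta B_r=-\alpha\cdot\nu$, while the potential term transforms as before. After multiplying the whole condition by $-1$, this yields the parameters $(-\eta_0\cosh r-\tau_0\sinh r,\ -\eta_0\sinh r-\tau_0\cosh r)$.

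\emph{Step 3 (choosing $r$).} This is the only step needing care, and it is elementary. The map $(\eta_0,\tau_0)\mapsto(\eta_0\cosh r+\tau_0\sinh r,\ \eta_0\sinh r+\tau_0\cosh r)$ is a Lorentz boost. It preserves $\eta^2-\tau^2=:c\neq0$ and acts transitively on each connected branch of the hyperbola. Since $c\neq 0$, no point lies on the asymptotes $\eta=\pm\tau$, and each branch is one of two halves:
\begin{itemize}
\item for $c>0$, the two branches are distinguished by $\sgn\eta$;
\item for $c<0$, they are distinguished by $\sgn\tau$.
\end{itemize}
If $(\eta,\tau)$ lies on the same branch as $(\eta_0,\tau_0)$, parametrize it explicitly; for $c>0$, $(\eta,\tau)=\sgn(\eta_0)\sqrt{c}\,(\cosh a,\sinh a)$, and similarly for the base point. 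Taking $r$ as the difference of the rapidities gives (i).
Otherwise $(-\eta,-\tau)$ lies on the branch of $(\eta_0,\tau_0)$. We pick $r$ boosting $(\eta_0,\tau_0)$ to $(-\eta,-\tau)$, and Step 2 gives (ii).

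Finally, equality of the operators, including their domains, follows because $f\in\Dom(B_rA B_r)$ if and only if $B_rf\in\Dom A$. This equivalence is exactly what Steps 1 and 2 verify, and the same holds with $\beta B_r$ in place of $B_r$.
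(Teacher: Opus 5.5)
Your proof is correct and is essentially the paper's argument: you use anticommutation so that $D_0$ is unchanged, multiply the transmission condition by an invertible constant matrix, and split into cases according to the branches of the hyperbola (by $\sgn\eta$ or $\sgn\tau$). The only difference is organizational: you compute the effect of $B_r$ and $\beta B_r$ on an arbitrary $(\eta_0,\tau_0)$ in one step, whereas the paper first reduces each point to the normal form $(k,0)$ or $(0,k)$ (Lemma~\ref{lemma1}), treats the sign flip separately via $\beta$ (Lemma~\ref{lemma0}), and then composes using $B_t(B_{t_0})^{-1}=B_{t-t_0}$ and the commutation of $\beta$ with $B_r$.
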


As an immediate consequence we get the following criterion for self-adjointness for the (not necessarily massless) Dirac operators with electrostatic and Lorenz scalar $\delta$-shell interactions.

\begin{corollary} \label{cor:selfa}
	Let $m_0,\eta_0,\tau_0\in\RR$ with $\eta_0^2-\tau_0^2\ne 0$ and $s_+,s_-\in[0,1]$ be such that $A^{s_+,s_-}_{m_0,\eta_0,\tau_0}$ is self-adjoint. Then all the operators $A^{s_+,s_-}_{m,\eta,\tau}$ 
	with $m\in\RR$ and $\eta^2-\tau^2=\eta_0^2-\tau_0^2$ 
	are self-adjoint.
\end{corollary}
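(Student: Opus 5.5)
Fix $m$, $\eta$, $\tau$ with $\eta^2-\tau^2=\eta_0^2-\tau_0^2\ne 0$. The idea is to reduce to the massless case, transfer self-adjointness along the congruence of Theorem~\ref{thm:conjugations} using Lemma~\ref{lemma2}, and then put the mass back in.

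\textbf{Removing the mass.} Since $m\beta$ is a bounded self-adjoint operator in $L^2(\RR^n,\CC^N)$ (multiplication by a Hermitian matrix), we have
\[
A^{s_+,s_-}_{m,\eta,\tau}=A^{s_+,s_-}_{0,\eta,\tau}+m\beta .
\]
This holds on the same domain, because neither the domain nor the transmission condition depends on $m$. By the Kato--Rellich theorem (bounded symmetric perturbations), $A^{s_+,s_-}_{m,\eta,\tau}$ is self-adjoint if and only if $A^{s_+,s_-}_{0,\eta,\tau}$ is. In particular, the hypothesis gives that $A_0:=A^{s_+,s_-}_{0,\eta_0,\tau_0}$ is self-adjoint. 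So it suffices to show that $A^{s_+,s_-}_{0,\eta,\tau}$ is self-adjoint.

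\textbf{Transferring along the congruence.} Apply Theorem~\ref{thm:conjugations} to obtain $r\in\RR$ with (i) or (ii).
\begin{itemize}
\item[(i)] Here $A^{s_+,s_-}_{0,\eta,\tau}=B_rA_0B_r$. The operator $B_r$ is multiplication by a constant Hermitian invertible matrix, so it is bounded, self-adjoint and bijective on $L^2$ with inverse $B_{-r}$. Lemma~\ref{lemma2} then gives the claim.
\item[(ii)] Here $A^{s_+,s_-}_{0,\eta,\tau}=-\beta B_rA_0\beta B_r$. Since $\beta$ commutes with $B_r=\exp(\tfrac r2\beta)$, the matrix $\beta B_r$ is Hermitian, and it is invertible with inverse $\beta B_{-r}$. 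Lemma~\ref{lemma2} shows that $(\beta B_r)A_0(\beta B_r)$ is self-adjoint, and the negative of a self-adjoint operator is self-adjoint. One should also check that the domain notation is consistent. This is immediate, since $\Dom(-T)=\Dom T$ and the composition $\beta B_r\cdot A_0\cdot\beta B_r$ is exactly the $BAB$ construction of Section~\ref{sec:FA} with $B=\beta B_r$.
\end{itemize}

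\textbf{Expected difficulty.} There is essentially no obstacle; the content lies in Theorem~\ref{thm:conjugations}. The only points needing care are the following.
\begin{itemize}
\item The mass term does not interact with the domain, so adding and removing $m\beta$ is legitimate in both directions. This is what allows $m_0$ and $m$ to differ freely.
\item In case (ii), $\beta B_r$ is indeed self-adjoint. This relies on the commutation $\beta B_r=B_r\beta$.
\end{itemize}
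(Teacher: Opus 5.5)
Your proposal is correct and follows the paper's proof: you reduce to $m_0=m=0$ because $m\beta$ is a bounded symmetric perturbation that leaves the domain unchanged, and then combine the representations (i) and (ii) of Theorem~\ref{thm:conjugations} with Lemma~\ref{lemma2}. Your extra checks that $\beta B_r$ is Hermitian and bijective, and that $-T$ is self-adjoint whenever $T$ is, fill in exactly the details the paper leaves implicit.
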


\begin{proof}
	Since bounded symmetric perturbations don't affect the self-adjointness, it is sufficient to consider the case $m_0=m=0$. Then the claim follows by combining the representations~$(i)$ and~$(ii)$ of Theorem~\ref{thm:conjugations} with Lemma~\ref{lemma2}.
\end{proof}

The distinction in Theorem~\ref{thm:conjugations} between the representations $(i)$ and $(ii)$ is related to whether the pairs $(\eta,\tau)$ and $(\eta_0,\tau_0)$ lie in the same branch of the hyperbola or not. This is unveiled by the following two lemmas, which combined together lead to the proof of Theorem~\ref{thm:conjugations}.

\begin{lemma} \label{lemma1}
	Let $k\in\RR\setminus\{0\}$. For all $s_\pm\in[0,1]$ and all $t\in\RR$ it holds that
	\begin{align}
        A^{s_+,s_-}_{0,k\cosh t,k\sinh t}&	=B_t A^{s_+,s_-}_{0,k,0} B_t, \label{conj1} \\
        A^{s_+,s_-}_{0,k\sinh t,k\cosh t}&	=B_t A^{s_+,s_-}_{0,0,k} B_t. \label{conj2}
	\end{align}
\end{lemma}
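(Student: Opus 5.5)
The plan is to verify both identities directly from the definition of $BAB$: show that the two operators act identically and that their domains coincide. Since $B_t$ is a constant matrix, invertible with inverse $B_{-t}$, multiplication by $B_t$ preserves $L^2$, the spaces $H^{s_\pm}(\Omega_\pm,\CC^N)$, and commutes with the traces $\gamma_\pm$. The algebraic facts needed are $\beta\alpha_k=-\alpha_k\beta$, which gives $\alpha_k B_t=B_{-t}\alpha_k$, and $\beta B_t=B_t\beta$.

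\emph{Action.} For $f$ with $f_\pm\in H^{s_\pm}_\alpha(\Omega_\pm,\CC^N)$, we compute in $\RR^n\setminus\Sigma$
\[
B_tD_0B_tf=-\rmi\sum_k B_t\alpha_kB_t\partial_kf=-\rmi\sum_k \alpha_kB_{-t}B_t\partial_kf=D_0f .
\]
In particular $D_0(B_tf_\pm)=B_{-t}D_0f_\pm\in L^2$ if and only if $D_0f_\pm\in L^2$. Hence $f_\pm\in H^{s_\pm}_\alpha$ if and only if $B_tf_\pm\in H^{s_\pm}_\alpha$, and on the common domain the two operators both act as $D_0$.

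\emph{Transmission conditions.} Set $g=B_tf$ and write $g_\pm^\Sigma=\gamma_\pm g_\pm=B_t\gamma_\pm f_\pm$. The condition defining $\Dom A_{0,k,0}$ for $g$ reads $\rmi(\alpha\cdot\nu)(g_+^\Sigma-g_-^\Sigma)+\tfrac k2(g_+^\Sigma+g_-^\Sigma)=0$. We multiply it on the left by the invertible matrix $B_t$ and use $B_t(\alpha\cdot\nu)B_t=\alpha\cdot\nu$. This gives the equivalent condition
\[
\rmi(\alpha\cdot\nu)(\gamma_+f_+-\gamma_-f_-)+\tfrac k2 B_{2t}(\gamma_+f_++\gamma_-f_-)=0 .
\]
Since $B_{2t}=\cosh t\, I_N+\sinh t\,\beta$, this is exactly the transmission condition with $(\eta,\tau)=(k\cosh t,k\sinh t)$, which proves \eqref{conj1}. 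For \eqref{conj2} the same multiplication turns $\tfrac k2\beta$ into $\tfrac k2 B_t\beta B_t=\tfrac k2\beta B_{2t}=\tfrac k2(\sinh t\, I_N+\cosh t\,\beta)$, giving $(\eta,\tau)=(k\sinh t,k\cosh t)$.

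There is no real obstacle here. The only care needed is to check that the domain condition $B_tf\in\Dom A$ is equivalent to membership in the target domain, including the Sobolev requirement, which the action step handles, and to keep track of the side on which the anticommutation moves $B_t$ past $\alpha\cdot\nu$. The hypothesis $k\neq0$ is not actually used in this computation.
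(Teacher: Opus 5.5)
Your proof is correct and follows the paper's argument essentially step for step: you derive $B_tD_0B_t=D_0$ from the anticommutation, check that the Sobolev conditions are preserved, and get equivalence of the transmission conditions from $B_t(\alpha\cdot\nu)B_t=\alpha\cdot\nu$ and $B_t^2=B_{2t}$. The only difference is cosmetic: you multiply the condition for $B_tf$ by $B_t$, where the paper multiplies the condition for $f$ by $B_{-t}$. Your explicit identity $D_0(B_tf_\pm)=B_{-t}D_0f_\pm$ also handles the $D_0f_\pm\in L^2$ part of the Dirac--Sobolev condition a bit more carefully than the paper does.
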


\begin{lemma} \label{lemma0}
	For all $\eta,\tau\in\RR$ and all $s_\pm\in[0,1]$ it holds that $A^{s_+,s_-}_{0,\eta,\tau}=-\beta A^{s_+,s_-}_{0,-\eta,-\tau}\beta$.
\end{lemma}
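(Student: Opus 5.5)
Since $\beta$ is a Hermitian matrix with $\beta^2=I_N$ and the multiplication operator $\beta$ is bounded, bijective and self-adjoint with $\beta^{-1}=\beta$, the plan is to compare the two operators directly, checking both their domains and their actions.

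\textbf{Domains.} By definition,
\[
\Dom\big(\beta A^{s_+,s_-}_{0,-\eta,-\tau}\beta\big)=\big\{f:\ \beta f\in \Dom A^{s_+,s_-}_{0,-\eta,-\tau}\big\}.
\]
Multiplication by the constant matrix $\beta$ preserves $H^{s_\pm}(\Omega_\pm,\CC^N)$. Since $\beta$ anticommutes with each $\alpha_k$, we have $D_0(\beta f_\pm)=-\beta D_0 f_\pm$. Hence $D_0(\beta f_\pm)\in L^2$ if and only if $D_0 f_\pm\in L^2$, so $\beta f_\pm\in H^{s_\pm}_\alpha(\Omega_\pm,\CC^N)$ if and only if $f_\pm$ is. The traces commute with $\beta$, that is, $\gamma_\pm(\beta f_\pm)=\beta\gamma_\pm f_\pm$, by density of smooth functions and boundedness of the trace maps.

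\textbf{Transmission condition.} For $\beta f$ with parameters $(-\eta,-\tau)$, write $[f]:=\gamma_+f_+-\gamma_-f_-$ and $\{f\}:=\gamma_+f_++\gamma_-f_-$. The condition reads
\[
\rmi(\alpha\cdot\nu)\beta[f]-\tfrac12(\eta I_N+\tau\beta)\beta\{f\}=0 .
\]
Using $(\alpha\cdot\nu)\beta=-\beta(\alpha\cdot\nu)$ and $(\eta+\tau\beta)\beta=\beta(\eta+\tau\beta)$, this becomes
\[
-\beta\Big(\rmi(\alpha\cdot\nu)[f]+\tfrac12(\eta+\tau\beta)\{f\}\Big)=0 .
\]
Since $\beta$ is invertible, this is exactly the transmission condition for $f$ with parameters $(\eta,\tau)$. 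So the two domains coincide.

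\textbf{Actions.} For $f$ in the common domain, in $\RR^n\setminus\Sigma$ we compute
\[
-\beta D_0(\beta f)=-\beta(-\beta D_0 f)=D_0 f .
\]
This holds in the distributional sense, because conjugating the anticommutation identity against test functions is harmless: a constant matrix commutes with differentiation. Hence $-\beta A^{s_+,s_-}_{0,-\eta,-\tau}\beta f=A^{s_+,s_-}_{0,\eta,\tau}f$, which proves the lemma.

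There is no real obstacle here; the only point requiring a moment's care is the sign bookkeeping in the transmission condition. The sign change $(\eta,\tau)\mapsto(-\eta,-\tau)$ must compensate exactly for the anticommutation of $\beta$ with $\alpha\cdot\nu$, which enters the jump term but not the mean term.
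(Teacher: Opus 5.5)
Your proof is correct and follows essentially the same route as the paper. You verify that multiplication by $\beta$ preserves $H^{s_\pm}_\alpha(\Omega_\pm,\CC^N)$ and that $-\beta D_0\beta=D_0$, and you show that the $(\eta,\tau)$ transmission condition for $f$ is equivalent to the $(-\eta,-\tau)$ condition for $\beta f$, because $\beta$ anticommutes with $\alpha\cdot\nu$ but commutes with $\eta I_N+\tau\beta$; the only differences are that you run the equivalence from the $(-\eta,-\tau)$ side and explicitly note that the traces commute with $\beta$, which the paper leaves implicit.
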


Assuming for a moment Lemmas~\ref{lemma1} and \ref{lemma0} (which we prove next), we show the proof of our key result.

\begin{proof}[Proof of Theorem~\ref{thm:conjugations}]
Let $K:=\eta_0^2-\tau_0^2\equiv \eta^2-\tau^2$.

\underline{Case A:} $K>0$.

$\bullet$ Subcase A.1: $\sgn \eta=\sgn\eta_0$. Then one can find
$k\in\RR$, with $k^2=K$, and $t_0,t\in\RR$ such that
\[
(\eta_0,\tau_0)=(k\cosh t_0,k\sinh t_0) \quad \text{and} \quad (\eta,\tau)=(k\cosh t,k\sinh t).
\]
By \eqref{conj1} one has $A^{s_+,s_-}_{0,\eta_0,\tau_0}	=B_{t_0} A^{s_+,s_-}_{0,k,0} B_{t_0}$ and $A^{s_+,s_-}_{0,\eta,\tau}	=B_{t} A^{s_+,s_-}_{0,k,0} B_{t}$.
Hence,
\begin{equation}
	\label{conj3}
A^{s_+,s_-}_{0,\eta,\tau}=B_t (B_{t_0})^{-1}A^{s_+,s_-}_{0,\eta_0,\tau_0}(B_{t_0})^{-1}B_{t}=
B_{t-t_0}A^{s_+,s_-}_{0,\eta_0,\tau_0}B_{t-t_0},
\end{equation}
which is the representation $(i)$ with $r:=t-t_0$.

$\bullet$ Subcase A.2: $\sgn \eta=-\sgn\eta_0$. Then $A^{s_+,s_-}_{0,-\eta,-\tau}=B_r A^{s_+,s_-}_{0,\eta_0,\tau_0} B_r$
with some $r\in\RR$ due to Subcase~A.1, while $A^{s_+,s_-}_{0,\eta,\tau}=-\beta A^{s_+,s_-}_{0,-\eta,-\tau} \beta$
due to Lemma~\ref{lemma0}. Since $\beta$ and $B_r$ commute, one arrives at the representation $(ii)$.

\underline{Case B:} $K<0$. 

$\bullet$ Subcase B.1: $\sgn \tau=\sgn\tau_0$. Then one can find
$k\in\RR$, with $k^2=K$,  and $t_0,t\in\RR$ such that
\[
(\eta_0,\tau_0)=(k\sinh t_0,k\cosh t_0),\quad (\eta,\tau)=(k\sinh t,k\cosh t).
\]
By \eqref{conj2} one has $A^{s_+,s_-}_{0,\eta_0,\tau_0}	=B_{t_0} A^{s_+,s_-}_{0,0,k} B_{t_0}$
and $A^{s_+,s_-}_{0,\eta,\tau}	 =B_{t} A^{s_+,s_-}_{0,0,k} B_{t}$,
which again leads to the identity \eqref{conj3} and yields the representation $(i)$ with $r:= t-t_0$.

$\bullet$ Subcase B.2:  $\sgn \tau= -\sgn\tau_0$. We apply Subcase~B.1 to $A^{s_+,s_-}_{0,-\eta,-\tau}$
and then use $A^{s_+,s_-}_{0,\eta,\tau}=-\beta A^{s_+,s_-}_{0,-\eta,-\tau} \beta$ to arrive
at the representation $(ii)$.
\end{proof}

We are only left with proving Lemmas~\ref{lemma1} and~\ref{lemma0}.

\begin{proof}[Proof of Lemma~\ref{lemma1}]
First, note that the action of all Dirac operators in \eqref{conj1} and \eqref{conj2} corresponds to $D_0$, which does not contain $\beta$. Due to the anti-commutation of $\alpha_j$ and $\beta$, for all $x\in \RR^n$ we have
$(\alpha\cdot x)B_t=B_{-t}(\alpha\cdot x)$. Hence, for any $f\in H^0_\alpha(\Omega_\pm,\CC^N)$ one has $B_t D_0 B_t f= B_t B_{-t} D_0 f=D_0 f$.
It is also clear that $(B_t f)_\pm\in H^{s_\pm}(\Omega_\pm,\CC^N)$ is equivalent to $f_\pm\in H^{s_\pm}(\Omega_\pm,\CC^N)$.
Hence, in order to prove \eqref{conj1} and \eqref{conj2} one simply needs to verify that:
\begin{itemize}
	\item[(a)] A function $f$ satisfies the transmission condition for $A^{s_+,s_-}_{0,k\cosh t,k\sinh t}$ if and only if 
	the function $B_t f$ satisfies the transmission condition for $A^{s_+,s_-}_{0,k,0}$.
	\item[(b)] A function $f$ satisfies the transmission condition for $A^{s_+,s_-}_{0,k\sinh t,k\cosh t}$ if and only if 
	the function $B_t f$ satisfies the transmission condition for $A^{s_+,s_-}_{0,0,k}$.
\end{itemize}

We first address the proof of (a). Note that the transmission condition
\[
\rmi(\alpha\cdot\nu) \big( \gamma_+f_+ - \gamma_-f_- \big) + \frac{1}{2} k\big((\cosh t) I_N + (\sinh t) \beta \big) \big( \gamma_+f_+ + \gamma_- f_-\big) = 0
\]
for $A^{s_+,s_-}_{0,k\cosh t,k\sinh t}$ can be rewritten as
\[
\rmi(\alpha\cdot\nu) \big( \gamma_+f_+ - \gamma_-f_- \big) + \frac{1}{2} k B_{2t} \big( \gamma_+f_+ + \gamma_- f_-\big) = 0
\]
and, after multiplying by $B_{-t}$ and noting that $B_{-t}(\alpha\cdot\nu)=(\alpha\cdot \nu)B_t$, it takes the equivalent form
\[
\rmi(\alpha\cdot\nu) \big( \gamma_+(B_tf)_+ - \gamma_-(B_tf)_- \big) + \frac{1}{2}  (kI_N+0\beta)\big( \gamma_+ (B_tf)_+ + \gamma_- (B_tf)_-\big) = 0,
\]
which is exactly the transmission condition for $A^{s_+,s_-}_{0,k,0}$ to be satisfied by $B_tf$.

We conclude with the proof of (b). In a similar fashion, we rewrite the transmission condition
\[
\rmi(\alpha\cdot\nu) \big( \gamma_+f_+ - \gamma_-f_- \big) + \frac{1}{2} k\big((\sinh t) I_N + (\cosh t) \beta \big) \big( \gamma_+f_+ + \gamma_- f_-\big) = 0
\]	
for $A^{s_+,s_-}_{0,k\sinh t,k\cosh t}$ as
\[
\rmi(\alpha\cdot\nu) \big( \gamma_+f_+ - \gamma_-f_- \big) + \frac{1}{2} k \beta B_{2t} \big( \gamma_+f_+ + \gamma_- f_-\big) = 0,
\]
and the multiplication by $B_{-t}$ results in
\[
\rmi(\alpha\cdot\nu) \big( \gamma_+(B_tf)_+ - \gamma_-(B_tf)_-\big) + \frac{1}{2}  (0 I_N+k\beta)\big( \gamma_+ (B_tf)_+ + \gamma_- (B_tf)_-\big) = 0,
\]
which is the transmission condition for $A^{s_+,s_-}_{0,0,k}$ to be satisfied by $B_tf$.
\end{proof}

\begin{proof}[Proof of Lemma~\ref{lemma0}]
First, note that the action of all Dirac operators in \eqref{conj1} and \eqref{conj2} corresponds to $D_0$, which does not contain $\beta$. Due to the anti-commutation of $\alpha_j$ and $\beta$, for all $x\in \RR^n$ one has~$(\alpha\cdot x)\beta=-\beta(\alpha\cdot x)$. For all $f\in H^0_\alpha(\Omega_\pm,\CC^N)$ one has then $\beta D_0 \beta f= \beta (-\beta D_0 f)=-D_0 f$.
It is obvious that $(\beta f)_\pm\in H^{s_\pm}(\Omega_\pm,\CC^N)$ is equivalent to $f_\pm\in H^{s_\pm}(\Omega_\pm,\CC^N)$.
Hence, it remains to show that a function $f$ satisfies the transmission condition for $A^{s_+,s_-}_{0,\eta,\tau}$
if and only if the function~$\beta f$ satisfies the transmission condition for $A^{s_+,s_-}_{0,-\eta,-\tau}$.
The former takes the form
\[
\rmi(\alpha\cdot\nu) \big( \gamma_+f_+ - \gamma_-f_- \big) + \frac{1}{2} (\eta I_N + \tau \beta ) \big( \gamma_+f_+ + \gamma_- f_-\big) = 0.
\]
The multiplication by $\beta$ together with the anti-commutation in the first summand yields
\[
\rmi(\alpha\cdot\nu) \big( \gamma_+\beta f_+ - \gamma_-\beta f_- \big) - \frac{1}{2} (\eta I_N + \tau \beta ) \big( \gamma_+\beta f_+ + \gamma_- \beta f_-\big) = 0,
\]
which is the transmission condition for $A^{s_+,s_-}_{0,-\eta,-\tau}$ to be satisfied by $\beta f$.
\end{proof}

\section{Applications to generalized MIT bag operators} \label{sec:applications}

In this section we consider in greater detail the case $\eta^2-\tau^2=-4$. In this case the transmission condition along $\Sigma$ is actually a confinement, namely, it decouples into two separate boundary conditions for $f_+$ and $f_-$, which are
\[ \label{bc1}
	\gamma_\pm f_\pm =\pm\dfrac{\rmi}{2}(\eta I_N-\tau\beta)(\alpha\cdot\nu)\gamma_\pm f_\pm.
\]
Since in this case one can parameterize $\eta=2\rho\sinh t$ and $\tau=2\rho\cosh t$ for some unique $t\in\RR$ and~$\rho\in\{-1,1\}$, the above boundary conditions take the form
\begin{align}
	\label{bc1a}
	\gamma_\pm f_\pm&=\pm\rho\big((\sinh t) I_N-(\cosh t)\beta\big)(\alpha\cdot\nu)\gamma_\pm f_\pm.
\end{align}
Remark that for $t=0$, namely for $\eta=0$ and $\tau=2\rho\in\{-2,2\}$, one obtains the boundary condition
\[ \label{bc1b}
	\gamma_\pm f_\pm =\pm \rho\beta(\alpha\cdot\nu)\gamma_\pm f_\pm,
\]
which is referred to as \emph{MIT bag} for $\rho=+1$ or \emph{anti-MIT bag} for $\rho=-1$. In turn, the boundary condition \eqref{bc1a} is often called a \emph{generalized MIT bag} boundary condition.

Due to the decoupling of the transmission condition we can write
\[
A^{s_+,s_-}_{m,\eta,\tau}=T^{s_+}_{m,\Omega_+,t,\rho}\oplus T^{s_-}_{m,\Omega_-,t,\rho},
\]
where $T^{s_\pm}_{m,\Omega_\pm,t,\rho}$ are the linear operators in $L^2(\Omega_\pm,\CC^N)$ acting as $f\mapsto D_m f$ on the domains
\begin{align*}
\Dom T^{s_\pm}_{m,\Omega_\pm,t,\rho} & =\big\{
f\in H^{s_\pm}_\alpha(\Omega_\pm,\CC): \text{ the boundary condition \eqref{bc1a} holds on } \Sigma
\big\},
\end{align*}
and are called \emph{generalized MIT bag operators}. Notice that by Lemma~\ref{lemma1} we can write 
\begin{align*}
T^{s_+}_{0,\Omega_+,t,\rho}\oplus T^{s_-}_{0,\Omega_-,t,\rho}&=A^{s_+,s_-}_{0,2\rho\sinh t,2\rho\cosh t}=B_t A^{s_+,s_-}_{0,0,2\rho}B_t = (B_tT^{s_+}_{0,\Omega_+,0,\rho}B_t)\oplus (B_t T^{s_-}_{0,\Omega_-,0,\rho}B_t),
\end{align*}
and therefore
\begin{equation}
	\label{conj5}
	T^{s_\pm}_{0,\Omega_\pm,t,\rho}=B_tT^{s_\pm}_{0,\Omega_\pm,0,\rho}B_t.
\end{equation}
Similarly, combining Lemmas~\ref{lemma1} and~\ref{lemma0} we have
\begin{equation}
	\label{conj6}
	T^{s_\pm}_{0,\Omega_\pm,t,\rho}=-\beta B_tT^{s_\pm}_{0,\Omega_\pm,0,-\rho}\beta B_t.
\end{equation}
The identities~\eqref{conj5} and~\eqref{conj6} allow applying the abstract results of Section~\ref{sec:FA} to the generalized MIT bag operators. This is what we do in the following sections.

\subsection{Self-adjointness} \label{sec:selfadj}

As an immediate consequence of Lemma~\ref{lemma2} we get the following criterion for self-adjointness for the generalized MIT bag operators.
\begin{lemma}\label{lem4}
	If \emph{some} generalized MIT bag Dirac operator $T^{s_\pm}_{m_0,\Omega_\pm,t_0,\rho_0}$ is self-adjoint in $L^2(\Omega_\pm,\CC^N)$, then all the generalized MIT bag operators $T^{s_\pm}_{m,\Omega_\pm,t,\rho}$ with $m\in\RR$, $t\in\RR$, and $\rho\in\{-1,1\}$ are also self-adjoint.
\end{lemma}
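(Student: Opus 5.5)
The plan is to reduce everything to the massless MIT bag operator $T^{s_\pm}_{0,\Omega_\pm,0,1}$, and then pass between any two generalized MIT bag operators using the congruences \eqref{conj5}--\eqref{conj6} together with Lemma~\ref{lemma2}.

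\emph{Removing the mass.} First observe that $T^{s_\pm}_{m,\Omega_\pm,t,\rho}=T^{s_\pm}_{0,\Omega_\pm,t,\rho}+m\beta$. The two operators have the same domain, since the boundary condition \eqref{bc1a} and the Dirac--Sobolev space do not involve $m$. The perturbation $m\beta$ is bounded and symmetric on $L^2(\Omega_\pm,\CC^N)$. By the Kato--Rellich theorem (or directly, since $(T+V)^*=T^*+V$ for bounded $V$), $T^{s_\pm}_{m,\Omega_\pm,t,\rho}$ is self-adjoint if and only if $T^{s_\pm}_{0,\Omega_\pm,t,\rho}$ is. Hence the hypothesis gives self-adjointness of $T^{s_\pm}_{0,\Omega_\pm,t_0,\rho_0}$, and it suffices to prove self-adjointness of $T^{s_\pm}_{0,\Omega_\pm,t,\rho}$ for all $t$ and $\rho$.

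\emph{Transfer to the massless (anti-)MIT bag.} Apply \eqref{conj5} with $t_0$ and use $B_{t_0}^{-1}=B_{-t_0}$. This gives $T^{s_\pm}_{0,\Omega_\pm,0,\rho_0}=B_{-t_0}T^{s_\pm}_{0,\Omega_\pm,t_0,\rho_0}B_{-t_0}$. To justify this inversion, note that $\Dom(B_{-t}(B_tTB_t)B_{-t})=\{f: B_tB_{-t}f\in\Dom T\}=\Dom T$, and the actions agree. The matrix $B_{-t_0}$ is Hermitian and invertible, so pointwise multiplication by it is a bounded, bijective, self-adjoint operator on $L^2(\Omega_\pm,\CC^N)$. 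Lemma~\ref{lemma2} therefore shows that $T^{s_\pm}_{0,\Omega_\pm,0,\rho_0}$ is self-adjoint.

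\emph{Reaching arbitrary $t$ and $\rho$.} If $\rho=\rho_0$, then \eqref{conj5} gives $T^{s_\pm}_{0,\Omega_\pm,t,\rho}=B_tT^{s_\pm}_{0,\Omega_\pm,0,\rho_0}B_t$, and Lemma~\ref{lemma2} yields self-adjointness. If $\rho=-\rho_0$, use \eqref{conj6}: $T^{s_\pm}_{0,\Omega_\pm,t,\rho}=(-\beta B_t)\,\bigl(-T^{s_\pm}_{0,\Omega_\pm,0,\rho_0}\bigr)\,(-\beta B_t)$, where the signs are redistributed as needed. Here $-T^{s_\pm}_{0,\Omega_\pm,0,\rho_0}$ is self-adjoint. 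Since $\beta$ and $B_t$ commute and are Hermitian, $\beta B_t$ is Hermitian, and it is invertible with inverse $B_{-t}\beta$. Lemma~\ref{lemma2} then applies once more. Finally, adding $m\beta$ back in recovers all $m\in\RR$.

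\emph{Expected obstacle.} There is no real obstacle. The only points needing care are that the domain conventions of $BAB$ match in each step, and that the sign bookkeeping in \eqref{conj6} is handled correctly. Both are routine, given that $B_t$ and $\beta$ preserve $H^{s_\pm}_\alpha(\Omega_\pm,\CC^N)$.
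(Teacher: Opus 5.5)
Your proof is correct and follows the paper's argument. You strip off the bounded symmetric term $m\beta$, then combine \eqref{conj5}--\eqref{conj6} with Lemma~\ref{lemma2}, treating $-\beta B_r T\beta B_r$ as $(\beta B_r)(-T)(\beta B_r)$. The only difference is cosmetic: you pass through the intermediate operator $T^{s_\pm}_{0,\Omega_\pm,0,\rho_0}$ in two congruence steps, whereas the paper composes them into a single congruence with $r=t-t_0$.
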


\begin{proof}
Since bounded symmetric perturbations don't affect the self-adjointness, it is sufficient to consider the case $m_0=m=0$. Due to \eqref{conj5} and \eqref{conj6}, for some $r\in\RR$ one has
\[
\text{either} \quad T^{s_\pm}_{0,\Omega_\pm,t,\rho}= B_r T^{s_\pm}_{0,\Omega_\pm,t_0,\rho_0} B_r
\qquad \text{or} \quad
T^{s_\pm}_{0,\Omega_\pm,t,\rho}= -\beta B_r T^{s_\pm}_{0,\Omega_\pm,t_0,\rho_0} \beta B_r,
\]
and then the claim follows by Lemma~\ref{lemma2}.
\end{proof}

The previous lemma can be shortened to the following general principle: if the MIT bag Dirac operator 
on some domain $\Omega\subset\RR^n$ is self-adjoint, then all generalized MIT bag operators in $\Omega$
are self-adjoint \emph{with the same Sobolev regularity of the operator domain}. This principle readily leads to the self-adjointness of generalized MIT bag models for all those cases in which the self-adjointness of the MIT bag operator is already known. It should be noted, however, that for a variety of cases the self-adjointness of Dirac operators with shell interactions
on suitable domains (typically with $s_\pm=1$ for compact smooth $\Sigma$ and $s_\pm=\frac12$ for compact Lipschitz $\Sigma$) has been proved for all $\eta,\tau\in \RR$ without using the above congruence; see, for instance,~\cite{behl,bhm,graz,vega,Benguria2017Self,O-BV}. Nevertheless,
we may consider the case of convex domains as a new application.

\begin{theorem}[Self-adjointness on $H^1$ in convex domains]\label{thm32}
	Let $\Omega_+\subset\RR^n$ be a bounded convex domain. Then the generalized MIT bag operators $T^{1}_{m,\Omega_+,t,\rho}$ are self-adjoint in $L^2(\Omega_+,\CC^N)$ for all~$m\in\RR$, $\rho\in\{-1,1\}$ and all $t\in\RR$.
\end{theorem}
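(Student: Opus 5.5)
The plan is to reduce everything to a single known case and then transfer self-adjointness with Lemma~\ref{lem4}. Since Lemma~\ref{lem4} says that self-adjointness of \emph{one} generalized MIT bag operator $T^{1}_{m_0,\Omega_+,t_0,\rho_0}$ on a given domain, with a given regularity parameter (here $s_+=1$), implies self-adjointness of all $T^{1}_{m,\Omega_+,t,\rho}$, it suffices to find one such operator. The natural candidate is the classical MIT bag operator, $t_0=0$, $\rho_0=1$, with some mass $m_0$ (e.g.\ $m_0=0$). Its boundary condition is $\gamma_+ f=\beta(\alpha\cdot\nu)\gamma_+ f$ on $\partial\Omega_+$ and its domain is $\{f\in H^1(\Omega_+,\CC^N):\ \text{this condition holds}\}$.

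The key step is therefore to invoke the result of~\cite{Pankrashkin2025} mentioned in the introduction: for bounded convex $\Omega_+\subset\RR^n$, the MIT bag Dirac operator with $H^1$ domain is self-adjoint in $L^2(\Omega_+,\CC^N)$. Two points need checking against that reference. First, its conventions (the sign of $\nu$, outward from $\Omega_+$ here, and the placement of $\beta(\alpha\cdot\nu)$) must match $T^1_{m_0,\Omega_+,0,1}$. If they differ by the sign, the cited operator is instead the anti-MIT one, $\rho=-1$, which is equally acceptable as a seed for Lemma~\ref{lem4}. Second, the result must hold in arbitrary dimension $n$ with general anticommuting matrices, or at least in the dimensions considered. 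If the reference only treats a specific $m$, say $m>0$ or $m=0$, this is harmless, because Lemma~\ref{lem4} removes the mass through bounded symmetric perturbations.

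Once this seed is available, the conclusion follows directly. By \eqref{conj5}/\eqref{conj6} and Lemma~\ref{lemma2}, the operator $T^1_{0,\Omega_+,t,\rho}$ is congruent to $T^1_{0,\Omega_+,0,1}$ via $B_t$ or $\beta B_t$. These are bounded, bijective, self-adjoint matrix multipliers that preserve $H^1(\Omega_+)$, so the $H^1$ regularity of the domain is transported. Adding $m\beta$ then gives the statement for every $m\in\RR$.

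The main obstacle is not the congruence argument, which is routine given Lemma~\ref{lem4}, but justifying the seed case precisely. One must confirm that the convex-domain $H^1$ self-adjointness of the MIT bag operator is available in exactly the form needed: general $n$, no smoothness beyond convexity (hence only Lipschitz boundary), and domain defined with the $H^1$ trace. If~\cite{Pankrashkin2025} covers only $n\in\{2,3\}$, the statement would have to be restricted accordingly, or the argument there (an integration-by-parts identity using the nonnegativity of the boundary mean curvature term for convex domains, together with approximation by smooth convex domains) would have to be checked to extend verbatim.
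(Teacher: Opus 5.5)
Your proposal is correct and follows the same approach as the paper: take the MIT bag operator $T^{1}_{0,\Omega_+,0,1}$, which is self-adjoint by \cite{Pankrashkin2025}, as the seed, and transfer self-adjointness with $H^1$ domain to all $T^{1}_{m,\Omega_+,t,\rho}$ via Lemma~\ref{lem4}. The paper applies the cited result directly in the stated setting, so your checks on conventions and dimension are sensible caution rather than missing steps.
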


\begin{proof}
	The operator $T^{1}_{0,\Omega_+,0,1}$ is self-adjoint in $L^2(\Omega_+,\CC^N)$ by \cite{Pankrashkin2025}. Therefore, the claim follows by Lemma~\ref{lem4}.
\end{proof}

\subsection{Infinite mass limit} \label{sec:infmass}

Given $M\in\RR$, consider the Dirac operator in $\RR^n$ with mass $m$ in $\Omega_+$ and mass $m+M$ in $\Omega_-$, which is the self-adjoint operator in $L^2(\RR^n,\CC^N)$ defined as
\begin{gather*}
	Z_{m,M,0,1}:\ f\mapsto D_m f +M\mathbbm{1}_{\Omega_-}\beta f, \qquad
	\Dom Z_{m,M,0,1}=H^1(\RR^n,\CC^N).
\end{gather*}
For a variety of situations it is known that when $M\to+\infty$ the operator $Z_{m,M,0,1}$ approximates (in a suitably defined sense) the Dirac operator $T^{s}_{m,\Omega_+,0,1}$ in $\Omega_+$ with mass $m$ and MIT bag boundary condition.
One of the possible versions is the following, where $0_{\Omega_-}$ is the zero operator in $\Omega_-$.

\begin{lemma}\label{prop23}
	Let $n\in\{2,3\}$. If $\Omega_+$ is a bounded smooth domain, then for all $m\in\RR$ and all~$\lambda\in\CC\setminus\RR$
	one has the norm convergence
	\[
	(Z_{m,M,0,1}-\lambda)^{-1}\xrightarrow{M\to+\infty}(T^{1}_{m,\Omega_+,0,1}-\lambda)^{-1}\oplus 0_{\Omega_-}.
	\]
\end{lemma}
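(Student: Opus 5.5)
The plan is not to reprove the infinite mass limit from scratch. Instead I will reduce the statement to the convergence results already available in the literature and then use Lemma~\ref{lem11} to remove whatever restrictions those results carry on $m$ and $\lambda$.

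\emph{Step 1: take the known convergence as a black box.} For a bounded smooth domain $\Omega_+$, the norm resolvent convergence
\[
(Z_{m_0,M,0,1}-\lambda_0)^{-1}\xrightarrow{M\to+\infty}(T^{1}_{m_0,\Omega_+,0,1}-\lambda_0)^{-1}\oplus 0_{\Omega_-}
\]
is established in \cite{ALTMR} for $n=3$ and in \cite{barb} (see also \cite{SV,bbz}) for $n=2$. It holds for some fixed mass $m_0$, typically $m_0=0$ or the mass used in those papers, and for at least one $\lambda_0\in\CC\setminus\RR$. Those papers also give the self-adjointness of $T^1_{m_0,\Omega_+,0,1}$ on $H^1$ for smooth $\Omega_+$. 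I will check carefully that their setting matches ours:
\begin{itemize}
\item the sign conventions for $\nu$ and for the MIT condition $\gamma_+f_+=\beta(\alpha\cdot\nu)\gamma_+f_+$;
\item the placement of the large mass $M$ in $\Omega_-$;
\item the choice of matrices $\alpha_k,\beta$.
\end{itemize}
Only the anticommutation relations enter, so any other representation of the matrices is unitarily equivalent to ours.

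\emph{Step 2: shift the mass and the spectral parameter.} Apply Lemma~\ref{lem11} with
\begin{itemize}
\item $\H=L^2(\RR^n,\CC^N)$ and $\H_0=L^2(\Omega_+,\CC^N)$;
\item $A_M=Z_{m_0,M,0,1}$ and $A=T^1_{m_0,\Omega_+,0,1}$;
\item $B=I$ and $C=(m-m_0)\beta$.
\end{itemize}
The operator $C$ is bounded and self-adjoint. Multiplication by $\beta$ leaves $\H_0$ invariant, so the hypotheses of the lemma hold. Since $Z_{m_0,M,0,1}+(m-m_0)\beta=Z_{m,M,0,1}$ and $T^1_{m_0,\Omega_+,0,1}+(m-m_0)\beta=T^1_{m,\Omega_+,0,1}$, with unchanged domains, Lemma~\ref{lem11} yields the claimed convergence for every $m\in\RR$ and every $\lambda\in\CC\setminus\RR$. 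The lemma is stated for a sequence $A_n$, but the parameter $M\to+\infty$ causes no difficulty: apply it along an arbitrary sequence $M_j\to+\infty$, or simply repeat its proof verbatim.

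\emph{Main obstacle.} The only real difficulty is Step~1, namely matching precisely the hypotheses and conventions of the cited works. In particular, I need to confirm that they prove norm (not merely strong) resolvent convergence, with the limit operator having domain exactly in $H^1(\Omega_+)$. If some reference gives the result only for $\lambda$ in a restricted region or for a fixed $m$, Step~2 handles that. If a reference uses the opposite orientation of $\nu$ or places the infinite mass with the opposite sign, I would first conjugate by $\beta$ or reflect conventions, as in Lemma~\ref{lemma0}, before invoking it.
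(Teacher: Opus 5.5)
Your proposal is correct and follows essentially the paper's own route: the paper gives no argument beyond citing \cite{barb} for $n=2$ and \cite{bbz} (rather than \cite{ALTMR}) for $n=3$. Your extra Step~2, which applies Lemma~\ref{lem11} with $B=I$ and $C=(m-m_0)\beta$ to pass from one pair $(m_0,\lambda_0)$ to all $(m,\lambda)$, is valid and is the same device the paper uses later in Lemma~\ref{lemma:res_criterion}. One convention to fix when you match the references: as written, $\gamma_+f_+=\beta(\alpha\cdot\nu)\gamma_+f_+$ forces $\gamma_+f_+=0$ because $(\beta(\alpha\cdot\nu))^2=-I_N$, so the MIT condition to compare with is $\gamma_+f_+=-\rmi\beta(\alpha\cdot\nu)\gamma_+f_+$, which is what the transmission condition with $\eta=0$, $\tau=2$ actually yields.
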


We refer to \cite{barb} for a proof when $n=2$ and to \cite{bbz} when $n=3$; note that in \cite{barb} some unbounded smooth domains $\Omega_+\subset\RR^2$ are also allowed, and some extensions to non-smooth domains will be discussed in the forthcoming paper~\cite{bkmp}.

Due to Lemma~\ref{lem11} the convergence is still valid if replace the both operators in Lemma~\ref{prop23} by congruently equivalent ones, and this simple observation leads to interesting consequences. Namely, consider the self-adjoint operators
\begin{equation*}
	Z_{m,M,t,\rho}:\ f\mapsto D_m f + \rho M\mathbbm{1}_{\Omega_-}\big( (\sinh t) I_N + (\cosh t)\beta\big) f, \qquad \Dom Z_{m,M,t,\rho} =H^1(\RR^n,\CC^N),
\end{equation*}
then the following result holds.

\begin{lemma} \label{lemma:res_criterion}
	Assume that for some $s\in[0,1]$ the MIT bag operator $T^{s}_{0,\Omega_+,t,1}$ is self-adjoint in~$L^2(\Omega_+,\CC^N)$
	and that  the norm convergence
	\[
	(Z_{0,M,0,1}-\lambda)^{-1}\xrightarrow{M\to+\infty}(T^{s}_{0,\Omega_+,0,1}-\lambda)^{-1}\oplus 0_{\Omega_-}
	\]
	holds for some $\lambda\in\CC\setminus\RR$. Then, for all $m\in\RR$, $t\in\RR$ and $\rho\in\{-1,1\}$ the norm convergence
	\[
	(Z_{m,M,t,\rho} - \lambda )^{-1} \xrightarrow{M\to+\infty}(T^{s}_{m,\Omega_+,t,\rho}-\lambda)^{-1}\oplus 0_{\Omega_-}
	\]
	holds for all $\lambda\in\CC\setminus\RR$.
\end{lemma}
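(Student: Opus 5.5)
The plan is to apply Lemma~\ref{lem11} with $\H:=L^2(\RR^n,\CC^N)$, $\H_0:=L^2(\Omega_+,\CC^N)$ (extended by zero), so that $\H_0^\perp=L^2(\Omega_-,\CC^N)$. We take $A_M:=\pm Z_{0,M,0,1}$ and $A:=\pm T^{s}_{0,\Omega_+,0,1}$. The bounded operator $B$ is multiplication by a constant Hermitian invertible matrix, and $C:=m\beta$.

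\emph{Preliminary steps.} First, $A$ is self-adjoint. The hypothesis gives self-adjointness of $T^{s}_{0,\Omega_+,t,1}$, and Lemma~\ref{lem4} transfers this to $T^{s}_{0,\Omega_+,0,1}$. The operators $Z_{0,M,0,1}$ are self-adjoint on $H^1$ as bounded perturbations of $D_0$.

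Second, the convergence at the single point $\lambda$ upgrades to every nonreal point. Apply Lemma~\ref{lem11} with $B=I$ and $C=0$; this gives the norm convergence of $(Z_{0,M,0,1}-z)^{-1}$ for all $z\in\CC\setminus\RR$. Using $(-Z-z)^{-1}=-(Z+z)^{-1}$, the operators $A_M=-Z_{0,M,0,1}$ then also converge in the sense of \eqref{anl} to $-T^s_{0,\Omega_+,0,1}$.

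\emph{Congruence at the level of the approximating operators.} Multiplication by $B_t$ or $\beta B_t$ preserves $H^1(\RR^n,\CC^N)$. The anticommutation relations give $B_tD_0B_t=D_0$ and $(-\beta B_t)D_0(\beta B_t)=-\beta D_0\beta=D_0$, since $\beta$ and $B_t$ commute. For the mass term we get
\[
B_t\,(M\mathbbm{1}_{\Omega_-}\beta)\,B_t=M\mathbbm{1}_{\Omega_-}\beta B_{2t}=M\mathbbm{1}_{\Omega_-}\big((\sinh t)I_N+(\cosh t)\beta\big),
\]
and the version conjugated by $\beta B_t$ with the minus sign gives the same expression multiplied by $-1$. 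Hence
\[
B_tZ_{0,M,0,1}B_t=Z_{0,M,t,1},\qquad (\beta B_t)(-Z_{0,M,0,1})(\beta B_t)=Z_{0,M,t,-1}.
\]
On the limit side, \eqref{conj5} gives $B_tT^{s}_{0,\Omega_+,0,1}B_t=T^s_{0,\Omega_+,t,1}$. Likewise \eqref{conj6}, applied with $\rho=-1$, gives $(\beta B_t)(-T^{s}_{0,\Omega_+,0,1})(\beta B_t)=T^s_{0,\Omega_+,t,-1}$.

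\emph{Conclusion.} Choose $B=B_t$ if $\rho=1$ and $B=\beta B_t$ if $\rho=-1$. In both cases $B$ is Hermitian (a product of commuting Hermitian matrices) and invertible. Together with $C=m\beta$, both operators commute with the multiplication by $\mathbbm{1}_{\Omega_+}$ and so leave $\H_0$ invariant. Moreover $BA_MB+C=Z_{m,M,t,\rho}$ and $BAB+C=T^s_{m,\Omega_+,t,\rho}$, because adding the bounded term $m\beta$ changes neither domains nor boundary conditions. Lemma~\ref{lem11} then yields the claimed norm resolvent convergence for all $\lambda\in\CC\setminus\RR$.

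The only delicate points are the sign in the case $\rho=-1$, handled by the preliminary upgrade to all nonreal spectral parameters, and checking that the conjugations map domains exactly onto domains. For the $Z$-operators this is immediate since the domain is $H^1$, and for the $T$-operators it is contained in \eqref{conj5}--\eqref{conj6}.
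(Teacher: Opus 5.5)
Your proposal is correct and follows the paper's proof. Both arguments rest on the four congruence identities $Z_{m,M,t,1}=B_tZ_{0,M,0,1}B_t+m\beta$, $Z_{m,M,t,-1}=-\beta B_tZ_{0,M,0,1}\beta B_t+m\beta$ and their counterparts \eqref{conj5}--\eqref{conj6} for $T^s$, fed into Lemma~\ref{lem11} with $\H_0=L^2(\Omega_+,\CC^N)$.

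Your explicit handling of the sign for $\rho=-1$ (taking $A_M=-Z_{0,M,0,1}$) fills in a detail the paper leaves implicit. The preliminary upgrade step is unnecessary, though, since $(-Z_{0,M,0,1}+\lambda)^{-1}=-(Z_{0,M,0,1}-\lambda)^{-1}$ already gives \eqref{anl} for $-Z_{0,M,0,1}$ at the nonreal point $-\lambda$.
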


\begin{proof}
	Since by hypothesis $T^{s}_{0,\Omega_+,t,1}$ is self-adjoint in~$L^2(\Omega_+,\CC^N)$, by Lemma~\ref{lem4} the operator~$T^{s}_{m,\Omega_+,t,1}$ is also self-adjoint~$L^2(\Omega_+,\CC^N)$. Moreover,
	\begin{align*}
		T^{s}_{m,\Omega_+,t,1}&=B_tT^{s}_{0,\Omega_+,0,1}B_t+m\beta, & T^{s}_{m,\Omega_+,t,-1}&= -\beta B_tT^{s}_{0,\Omega_+,0,1} \beta B_t+m\beta,\\
		Z_{m,M,t,1}&=B_t Z_{0,M,0,1}B_t+m\beta, & Z_{m,M,t,-1}&= -\beta B_t Z_{0,M,0,1} \beta B_t+m\beta,
	\end{align*}
	and $L^2(\Omega_+,\CC^N)$ is an invariant subspace for both $\beta$ and $B_t$, the claim follows by Lemma~\ref{lem11}.
\end{proof}

A straightforward combination of Lemmas~\ref{prop23} and \ref{lemma:res_criterion} leads to the following effective interpretation of the generalized MIT bag boundary conditions.

\begin{theorem}[Generalized MIT bag as an infinite mass limit]\label{thm-conv}
	Let $n\in\{2,3\}$. If $\Omega_+$ is a bounded smooth domain, then for all $m\in\RR$, $t\in\RR$,  $\rho\in\{-1,1\}$ and $\lambda\in\CC\setminus\RR$, one has the norm convergence
	\[
	(Z_{m,M,t,\rho}-\lambda)^{-1}\xrightarrow{M\to+\infty}(T^{1}_{m,\Omega_+,t,\rho}-\lambda)^{-1}\oplus 0_{\Omega_-}.
	\]
\end{theorem}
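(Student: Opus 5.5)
The plan is to combine Lemma~\ref{prop23} with Lemma~\ref{lemma:res_criterion} for $s=1$, so the work reduces to checking the two hypotheses of Lemma~\ref{lemma:res_criterion}.

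\emph{Convergence hypothesis.} Apply Lemma~\ref{prop23} with $m=0$. Since $n\in\{2,3\}$ and $\Omega_+$ is bounded and smooth, it gives
\[
(Z_{0,M,0,1}-\lambda)^{-1}\xrightarrow{M\to+\infty}(T^{1}_{0,\Omega_+,0,1}-\lambda)^{-1}\oplus 0_{\Omega_-}
\]
for every $\lambda\in\CC\setminus\RR$, in particular for one such $\lambda$. Implicit in this is the self-adjointness of $T^1_{0,\Omega_+,0,1}$, which is also classical for bounded smooth domains.

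\emph{Self-adjointness hypothesis.} Lemma~\ref{lemma:res_criterion} asks that $T^{1}_{0,\Omega_+,t,1}$ be self-adjoint. Since $T^1_{0,\Omega_+,0,1}$ is self-adjoint, Lemma~\ref{lem4} yields self-adjointness of $T^1_{m,\Omega_+,t,\rho}$ for all $m,t,\rho$. Alternatively, one can use \eqref{conj5} together with Lemma~\ref{lemma2} directly.

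Lemma~\ref{lemma:res_criterion} then gives the claimed norm convergence for all $m$, $t$, $\rho$ and all nonreal $\lambda$.

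The only point needing care, and the one I would check first, is that the hypotheses of Lemma~\ref{lem11} actually hold inside the proof of Lemma~\ref{lemma:res_criterion}. Concretely, I would verify the identities expressing $Z_{m,M,t,\pm1}$ through $Z_{0,M,0,1}$. The term $\rho M\mathbbm{1}_{\Omega_-}((\sinh t)I_N+(\cosh t)\beta)$ equals $\rho M\mathbbm{1}_{\Omega_-}\beta B_{2t}$. For $\rho=1$, conjugating $D_0+M\mathbbm{1}_{\Omega_-}\beta$ by $B_t$ gives $D_0+M\mathbbm{1}_{\Omega_-}\beta B_{2t}$, because $B_tD_0B_t=D_0$ and $B_t\beta B_t=\beta B_{2t}$. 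For $\rho=-1$, the extra $-\beta\cdot\beta$ conjugation flips the sign of the mass term while preserving $D_0$. The domain $H^1(\RR^n,\CC^N)$ is invariant under $B_t$ and $\beta$, so these identities hold as operator equalities.

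Finally, $m\beta$ is a bounded self-adjoint operator $C$ leaving $L^2(\Omega_+)$ invariant, which is exactly the setting of Lemma~\ref{lem11}, so the argument goes through.
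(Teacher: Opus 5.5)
Your proposal is correct and is the paper's argument: the theorem is obtained by combining Lemma~\ref{prop23} (with $m=0$) with Lemma~\ref{lemma:res_criterion} for $s=1$, with self-adjointness supplied via Lemma~\ref{lem4}. Your extra verification of the identities expressing $Z_{m,M,t,\pm 1}$ through $Z_{0,M,0,1}$ by the congruences $B_t$ and $\beta B_t$, which the application of Lemma~\ref{lem11} relies on, is also accurate.
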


\section*{Funding information}

This article is based upon work from COST Action 24122 mSPACE, supported by COST (European Cooperation in Science and Technology), \url{www.cost.eu}.

J.~Duran was supported by the Catalan grant 2021-SGR-00087,
 by the Spanish grant PID2025-168310NB-I00 funded by MCIN/AEI/10.13039/501100011033, by ERDF ``A way of making Europe". His work was also supported by the Spanish State Research Agency, through the Severo Ochoa and Mar\'ia de Maeztu Program for Centers and Units of Excellence in R\&D (CEX2020-001084-M), and more specifically by the grant CEX2020-001084-M-20-1. He also acknowledges CERCA Programme/Generalitat de Catalunya for institutional support.

K.~Pankrashkin was partially supported by the Deutsche Forschungsgemeinschaft (DFG, German Research Foundation), project 491606144.

\end{document}